\newcommand{\ubar}[1]{\underaccent{\bar}{#1}}
\newcommand\prs[1]{\left(#1\right)}
\newcommand\sbk[1]{\left[#1\right]}
\newcommand{\E}{\mathbb{E}}
\newcommand{\gtz}{\rightarrow0}
\newtheorem{theorem}{Theorem}
\newtheorem{assumption}{Assumption}
\newtheorem{corollary}{Corollary}
\newtheorem{proposition}{Proposition}
\theoremstyle{definition}
\theoremstyle{remark}
\title{Double Distributionally Robust Bid Shading\\for First Price Auctions}
\author{%
Yanlin Qu$^1$ \quad Ravi Kant$^2$ \quad Yan Chen$^3$ \quad Brendan Kitts$^2$ \\
\textbf{San Gultekin}$^2$ \quad \textbf{Aaron Flores}$^2$ \quad \textbf{Jose Blanchet}$^1$ \\
$^1$Stanford University \quad $^2$Yahoo! \quad $^3$Duke University\\
\texttt{\{quyanlin,jose.blanchet\}@stanford.edu}\\
\texttt{\{rkant,brendan.kitts,sgultekin,aaron.flores\}@yahooinc.com}\\
\texttt{yan.chen@duke.edu}
}
\begin{document}

\maketitle

\begin{abstract}
  Bid shading has become a standard practice in the digital advertising industry, in which most auctions for advertising (ad) opportunities are now of first price type. Given an ad opportunity, performing bid shading requires estimating not only the value of the opportunity but also the distribution of the highest bid from competitors (i.e. the competitive landscape). Since these two estimates tend to be very noisy in practice, first-price auction participants need a bid shading policy that is robust against relatively significant estimation errors. In this work, we provide a max-min formulation in which we maximize the surplus against an adversary that chooses a distribution both for the value and the competitive landscape, each from a Kullback-Leibler-based ambiguity set. As we demonstrate, the two ambiguity sets are essential to adjusting the shape of the bid-shading policy in a principled way so as to effectively cope with uncertainty. Our distributionally robust bid shading policy is efficient to compute and systematically outperforms its non-robust counterpart on real datasets provided by Yahoo DSP.
\end{abstract}

\section{Introduction}
\label{introduction}
Online advertising enables businesses to promote their products and services through placing ads on social media, search engines, apps, and websites. Publishers sell their advertising opportunities through online auctions organized by ad exchanges where advertisers bid in real-time. Demand Side Platforms (DSPs) bid on behalf of advertisers to optimally spend their advertising dollars. The work in this paper was done in collaboration with one of the largest DSPs in the industry, with more than a billion dollars in yearly revenue, providing the data resources at scale for evaluation of the techniques shown in this paper. The Global DSP market size is estimated to be above \$20B \cite{ForBusIn}, with first price auctions constituting almost all of the revenue \cite{ou2023survey}.

Second Price Auctions (SPAs) used to play a dominant role in online advertising. Google, in particular, leveraged SPAs for its Adwords and Adsense platforms, starting in the 1990s. However, the period between 2018 and 2019 marked a significant transition in the digital advertising sector, with a majority of display ad auctions moving away from SPAs to First Price Auctions (FPAs) by 2020. This transition was motivated by various factors, including a growing demand for greater transparency and accountability, concerns over yields, and the inclusion of header bidding technologies that are difficult to implement in the SPA context.

As we shall review, optimal bidding in the SPA setting is easy, whereas optimal bidding in FPAs is difficult due to hard-to-reduce estimation errors. So, the goal of this paper is to provide a disciplined approach based on distributionally robust optimization for optimal surplus bidding in FPAs.

In SPAs, the winner pays the second-highest bid price. SPAs have what is known as the Vickrey property \cite{vickrey1961counterspeculation}; i.e. bidding truthfully is a dominant strategy that leads to the best possible outcome regardless of how competitors bid. Therefore, given an ad opportunity, advertisers should calculate its value and then bid such value.

In FPAs, the winner pays the amount bid. Therefore, it is not optimal to bid truthfully (i.e. the exact value of the opportunity) since in that case there is zero profit. Consequently, in FPAs advertisers have an incentive to bid a price lower than the estimated value of the opportunity. This lowering procedure is called bid shading, which increases the profit margin to the winner but decreases the win rate. 

To address this trade-off, DSPs (on behalf of the advertisers they represent) need to perform two challenging estimation tasks. First, they need to estimate the win rate as a function of bid price, which is the cumulative distribution function (CDF) of the highest bid from competitors (minimum winning price). And, second, they need to estimate the value of an advertisement opportunity.


DSPs receive billions of bid requests per day and send millions of bids per second.
For the two challenging estimation tasks mentioned above, DSPs have abundant data to train complex machine learning models (left half of Figure \ref{flow}), but the output of these models needs to be summarized succinctly in the bid shading policy due to online latency constraints. As a result, the two estimates used in traditional bid shading approaches tend to be very noisy. Therefore, in order to mitigate the impact of these noisy estimates and improve the out-of-sample surplus of bid-shading policies, we propose a novel robust algorithm  (right half of Figure \ref{flow}). While other distributionally robust optimization methods have been introduced for first price auctions \cite{kasberger2023robust}, our method uses two so-called distributionally robust uncertainty sets, one to capture noise estimation in the value of the opportunity and the other one to account for noise estimation in the win rate (corresponding to the two challenging tasks mentioned earlier). 

A conventional bid shading policy maximizes the expected surplus under a baseline model built out of the estimated value and the win rate. In contrast, the distributionally robust bid shading (DRBS) framework introduces a fictitious adversary that wishes to minimize the surplus by choosing a probability model (within a specified neighborhood) for the value of the opportunity and the win rate. The set of probability models that we utilize is based on the Kullback-Leibler (KL) divergence. In our DRBS formulation, we use two KL-based neighborhoods, one corresponding to the value and the second one corresponding to the win rate. The resulting bid shading policy is obtained from a max-min game in which the objective of this zero-sum game is the expected surplus.

\begin{figure}[ht]
\begin{center}
\centerline{\includegraphics[width=0.7\columnwidth]{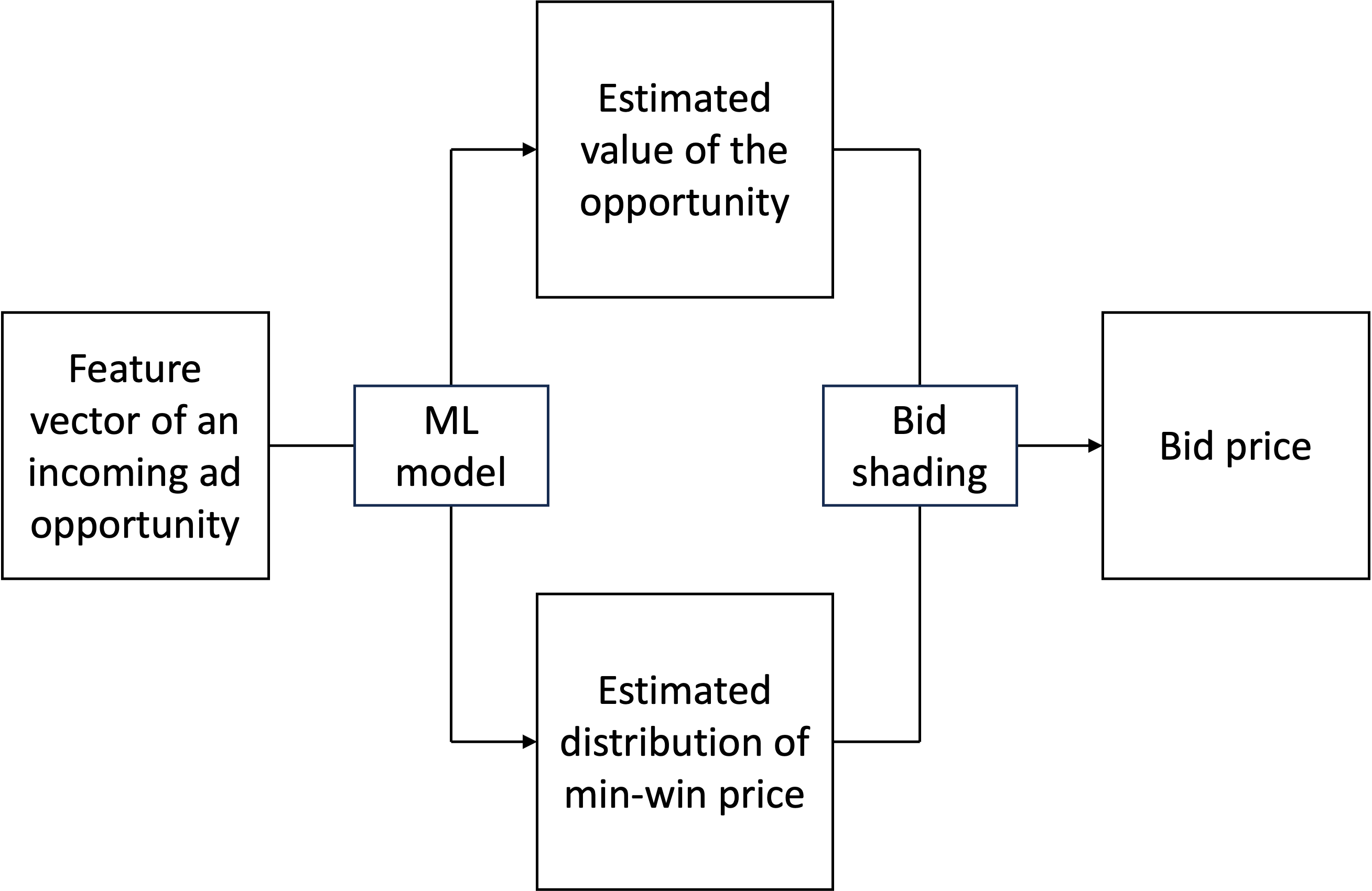}}
\caption{Real-time bidding system for first-price auction}
\label{flow}
\end{center}
\end{figure}
\subsection{Main contributions}

In addition to formulating this DRBS problem with two KL uncertainty sets (the majority of distributionally robust optimization formulations involve only one uncertainty set), our second technical contribution is providing easy-to-compute expressions for the resulting DRBS policy, which are given in our main result in Theorem \ref{main}. The simplicity of the approach is important due to the latency issues mentioned earlier when deploying bid-shading algorithms at scale.

In particular, in Theorem \ref{main} we show that the bid shading policy can be obtained as the root of an explicit monotonic function within a known finite interval. So, it can be efficiently computed (via bisection) under strict latency constraints. 

Further, the resulting bid-shading formula reveals that the two sources of uncertainty affect the optimal bid price in opposite directions. 
\begin{quote}
    \it Everything else equal, the more uncertain we are about the competitive landscape, the higher we bid (provided this is profitable). Similarly, the more uncertain we are about the value, the lower we bid.
\end{quote}

Understanding this trade-off (which our resulting formula does in a principled way) is essential for improving the DSPs surplus. We show rigorously in Section \ref{DRBS}; see Corollary \ref{cor} and Proposition \ref{prop} that our bid-shading strategy allows to control the previous trade-off in a way that is directly interpretable in terms of the degree of distributional uncertainty in the competitive landscape and the value, respectively. 

The improved performance derived from these principles is confirmed by our experiments (Section \ref{experiment results}). The two effects are calibrated via the sizes of the two KL neighborhoods, as we shall explain in Section \ref{experiment design}.

We implement and test the DRBS policy on real datasets provided by Yahoo DSP. We observe that our DRBS systematically outperforms its non-robust counterpart under the effective cost per action (eCPX) metric.



\subsection{Related works}
A comprehensive survey on real-time bidding can be found in \cite{ou2023survey}. Our brief literature review will focus on first-price auctions and distributionally robust optimization (DRO).

For advertisers to better survive {\it the brave new world of first-price auctions} \cite{gligorijevic2020bid}, the literature on bid shading is growing rapidly \cite{despotakis2021first, feng2021reserve,han2020learning,zhang2021meow,grigas2017profit}. In \cite{zhou2021efficient}, neural networks are used to estimate the competitive landscape, and the optimal bid price is obtained by maximizing the expected surplus, which serves as the (non-robust) baseline in our paper. 
The difficulty of estimating the competitive landscape depends on whether the minimum winning price (MWP) data is provided by Supply Side Platforms (SSPs). When the MWP is provided, in \cite{gligorijevic2020bid}, field-weighted factorization machines (FwFM) \cite{rendle2010factorization,guo2017deepfm} are used to predict the optimal shading factor (the bid price divided by the MWP). FwFMs have also been successfully applied to handle sparse data \cite{juan2016field, pan2019predicting,pan2018field}. When the MWP is censored by SSPs (the only available information is whether the MWP is larger than the bid price), the WinRate model in \cite{pan2020bid} still manages to approximate the distribution of MWP via log-logistic model. For multi-layer auctions, the above two classes of approaches are combined in \cite{si2022optimal} to better estimate the competitive landscape.

Regardless of whether SSPs provide the MWP, the optimal bid price should maximize the expected surplus according to the estimated value and the estimated competitive landscape. Given the noisy nature of these estimates, the maximization should greatly benefit from DRO; see \cite{rahimian2022frameworks} for a recent review of DRO. The application of DRO to first-price auctions starts from \cite{kasberger2023robust} where the ambiguity set is large and contains all distributions (of MWP) that have enough mass beyond a small level. The major innovation of our work is introducing two (small) KL ambiguity sets that turn out to be essential in the context of real-time bidding. Recently, holistic DRO is proposed in \cite{bennouna2022holistic}, where the ambiguity set is also described by multiple constraints, to fight against all sources of overfitting.





\section{Distributionally Robust Bid Shading}
\label{DRBS}
\subsection{Problem formulation}
To begin, we recall the (non-robust) baseline bid shading problem. 
Given an ad opportunity, let $X$ be the highest bid from competitors (minimum winning price) and $b$ be our bid price for this opportunity. When the latter is larger, we win the auction and pay for the opportunity. Then our ad is shown to the viewer.
If the ad is click based, it generates value if and only if it is clicked by its viewer.
Let $v$ be the (expected) value of the ad, which is the expectation of the realized value $V$, a scaled Bernoulli random variable ($a\cdot\text{Ber}(p)$).
Let $\bar{P}$ and $\bar{Q}$ be the (estimated) distributions of $V$ and $X$. Given the two distributions, the goal of bid shading is 
\begin{equation}
    \max_{b\geq0}\E_{\bar{P},\bar{Q}} (V-b)I(X\leq b),\tag{baseline}
\end{equation}
i.e. choosing the bid price below $v$ that maximizes the expected surplus.
It is natural to assume that $V$ and $X$ are independent because whether the viewer clicks our ad given $\bar{P}$ is irrelevant to whatever happens behind the screen.
Since the two estimates $\bar{P}$ and $\bar{Q}$ tend to be noisy, we introduce two independent KL balls centered at them as ambiguity sets
\begin{align*}
    \mathcal{P}(\delta_V)=\{P:D(P||\bar{P})\leq\delta_V\},\;\;\mathcal{Q}(\delta_X)=\{Q:D(Q||\bar{Q})\leq\delta_X\},
\end{align*}
where $\delta_V,\delta_X>0$ are radii. Recall that the KL divergence $D$ is given by
$$D(p_1||p_2)=\E_{p_1}\log\prs{p_1(Z)/p_2(Z)}$$
where $p_1$ and $p_2$ can be two PMFs (for $P$ and $\bar{P}$) or two PDFs (for $Q$ and $\bar{Q}$) with $p_1\ll p_2$ ($p_2(x)=0$ implies $p_1(x)=0$).
To maximize the worst-case expected surplus, the distributionally robust bid shading problem is
\begin{equation}
    \max_{b\geq0}\min_{\substack{P\in\mathcal{P}(\delta_V)\\Q\in\mathcal{Q}(\delta_X)}}\E_{P,Q} (V-b)I(X\leq b)\tag{DRBS}
\end{equation}
where the adversary tries to minimize the expected surplus by disturbing the two distributions within the two KL balls.
\subsection{Computable solution}
Surprisingly, the above double DRO problem is almost analytically solvable. To present our solution, some notation and assumptions are needed. Let $\bar{p}=P_{\bar{P}}(V>0)$ be the estimated click probability under $\bar{P}$. Define $\bar{v}=(v/\bar{p})r^{-1}(\delta_V)$ where $r^{-1}$ is the inverse of
$$r(p)=p\log(p/\bar{p})+(1-p)\log((1-p)/(1-\bar{p})),\;\;p\in[0,\bar{p}],$$
which is strictly decreasing.
In fact, this $\bar{v}$ is the worst-case value on $\mathcal{P}(\delta_V)$, which will serve as an upper bound of the DRBS solution in Theorem \ref{main}. After replacing $v$ with $\bar{v}$ in the baseline problem, its solution satisfies (the first order optimality condition)
$$F(b)=(\bar{v}-b)f(b),\;\;b\in[0,\bar{v}]$$
where $F$ and $f$ be the CDF and PDF of $X$ under $\bar{Q}$, respectively. We use $\ubar{v}$ to denote this baseline solution, which will serve as a lower bound of the DRBS solution in Theorem \ref{main}. The existence and uniqueness of $\ubar{v}$ has been discussed in \cite{zhou2021efficient}.
\begin{assumption}
    $\delta_V<r(0)$ and $\delta_X<-\log(1-F(\bar{v})).$
\end{assumption}
This assumption excludes the trivial case of too much uncertainty where the DRBS objective is always zero regardless of $b$. 
\begin{assumption}
    $F$ is log-concave.
\end{assumption}
This is the key assumption that makes the DRBS solution easily computable. Log-concave distributions have log-concave PDFs and hence log-concave CDFs \cite{bagnoli2006log}, i.e. both $\log f$ and $\log F$ are concave. For example, normal and log-normal distributions are log-concave.
\begin{assumption}
    $F(\bar{v})<1/2$ and $F(0)=0.$
\end{assumption}
This is a technical assumption that makes the proof simpler. Given that the overall win rate of the DSP in our experiments is around 5\% (this rate is typical), the assumption holds in practice. 
Now we present the solution of the DRBS problem, assuming that $\bar{v}$ and $\ubar{v}$ have already been computed.
\begin{theorem}
\label{main}
Under Assumption 1-3, the DRBS solution $b^*$ is the unique solution of $g(b)=\delta_X$ in $[\ubar{v},\bar{v}]$
where
\begin{gather*}
g(b)=\log\eta(b)-\log J(b)-\frac{F(b)\log\eta(b)}{J(b)},\\
    J(b)=F(b)+\eta(b)-F(b)\eta(b),\;\;\eta(b)=h^{-1}(L(b))\\
    L(b)=\frac{F(b)}{(\bar{v}-b)f(b)},\;\;h(x)=\frac{x-1}{\log x},\;\;x\geq0.
\end{gather*}
In particular, $g$ is strictly increasing in $[\ubar{v},\bar{v}]$, so $b^*$ can be computed via bisection.
\end{theorem}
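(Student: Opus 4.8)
\emph{Proof plan.} The strategy is to collapse the double max--min into a one-dimensional maximization in $b$, solve both inner adversarial problems in closed form using Kullback--Leibler duality, and then identify $b^*$ with the root of a single scalar equation whose monotonicity is controlled by log-concavity. For the reduction, note that independence of $V$ and $X$ gives $\E_{P,Q}(V-b)I(X\le b)=(\E_P V-b)\,Q(X\le b)$. Since $V$ is supported on $\{0,a\}$ with $a\bar p=v$, minimizing $\E_P V$ over $\mathcal P(\delta_V)$ is the scalar problem $\min\{ap:\ r(p)\le\delta_V\}$, solved by $p=r^{-1}(\delta_V)$ because $r$ is strictly decreasing; this is exactly $\bar v=(v/\bar p)r^{-1}(\delta_V)$, and $\bar v>0$ by Assumption 1. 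For $b\in[0,\bar v]$ both factors are nonnegative for every feasible $(P,Q)$, so the two adversarial choices decouple and the inner value equals $(\bar v-b)\,\ubar F(b)$ with $\ubar F(b):=\min\{Q(X\le b):D(Q\|\bar Q)\le\delta_X\}$; for $b>\bar v$ the inner value is negative, hence never optimal. Thus $\mathrm{DRBS}=\max_{b\in[0,\bar v]}\Phi(b)$, $\Phi(b)=(\bar v-b)\ubar F(b)$, with $\Phi(0)=\Phi(\bar v)=0$ (using $F(0)=0$) and $\Phi>0$ near $\bar v$ by Assumption 1, so the maximizer is interior.

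\emph{Closed form for the inner problem and the first-order condition.} By the chain rule for KL divergence, an optimal $Q$ may be taken to preserve the conditional laws of $X$ given $\{X\le b\}$ and $\{X>b\}$, reducing $\ubar F(b)$ to $\min\{q:\ \mathrm{kl}(q,F(b))\le\delta_X,\ q\in[0,F(b)]\}$ where $\mathrm{kl}(q,F)=q\log(q/F)+(1-q)\log((1-q)/(1-F))$. As $\mathrm{kl}(\cdot,F)$ decreases from $0$ to $-\log(1-F)$ on $[0,F]$, whenever $\delta_X<-\log(1-F(b))$ the value $q^*=\ubar F(b)$ is the unique root of $\mathrm{kl}(q^*,F(b))=\delta_X$. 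Introducing $\eta_b:=F(b)(1-q^*)/(q^*(1-F(b)))>1$ and $J_b:=F(b)+\eta_b(1-F(b))$, a direct computation yields $q^*=F(b)/J_b$ and recasts the constraint as $g_{\mathrm{in}}(b,\eta_b)=\delta_X$ with $g_{\mathrm{in}}(b,\eta)=\log\eta-\log J-F(b)\log\eta/J$, $J=F(b)+\eta(1-F(b))$. One checks $\partial_\eta g_{\mathrm{in}}=F(b)(1-F(b))\log\eta/J^2>0$ for $\eta>1$, so $g_{\mathrm{in}}(b,\cdot)$ is a strictly increasing bijection of $(1,\infty)$ onto $(0,-\log(1-F(b)))$, which makes $\eta_b$ well defined and $\ubar F$ differentiable. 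Implicit differentiation of $\mathrm{kl}(q^*,F(b))=\delta_X$ gives $(\ubar F)'(b)=(\eta_b-1)f(b)/(J_b\log\eta_b)$, hence
\[
\Phi'(b)=\frac{f(b)(\bar v-b)}{J_b}\Bigl(h(\eta_b)-L(b)\Bigr),\qquad h(x)=\frac{x-1}{\log x},\qquad L(b)=\frac{F(b)}{(\bar v-b)f(b)}.
\]
Thus any interior maximizer satisfies $h(\eta_b)=L(b)$, i.e.\ $\eta_b=h^{-1}(L(b))$; substituting this into $g_{\mathrm{in}}(b,\eta_b)=\delta_X$ produces exactly the equation $g(b)=\delta_X$ of the statement.

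\emph{Monotonicity of $g$ and identification of $b^*$.} It remains to prove $g$ is strictly increasing on $[\ubar v,\bar v]$. Log-concavity of $F$ (Assumption 2) makes $F/f$ increasing, so $L(b)=(F/f)/(\bar v-b)$ is strictly increasing, and therefore $\eta(b):=h^{-1}(L(b))$ is strictly increasing and exceeds $1$ on $(\ubar v,\bar v)$ since $L(\ubar v)=1$ by the defining relation $F(\ubar v)=(\bar v-\ubar v)f(\ubar v)$. Writing $g(b)=g_{\mathrm{in}}(b,\eta(b))$ and differentiating, $g'(b)=\partial_F g_{\mathrm{in}}\cdot f(b)+\partial_\eta g_{\mathrm{in}}\cdot\eta'(b)$; the second summand is positive by the previous paragraph, while $\partial_F g_{\mathrm{in}}=((\eta-1)J-\eta\log\eta)/J^2>0$ whenever $F(b)<1/2$ (which holds by Assumption 3: then $J>(\eta+1)/2$ and $(\eta^2-1)/2>\eta\log\eta$ for $\eta>1$). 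Hence $g$ is strictly increasing, $g(\ubar v)=0$ (as $\eta(\ubar v)=1$), and $\lim_{b\uparrow\bar v}g(b)=-\log(1-F(\bar v))>\delta_X$ by Assumption 1, so $g(b)=\delta_X$ has a unique root $b^*\in(\ubar v,\bar v)$, computable by bisection. Finally, $\operatorname{sign}\Phi'(b)=\operatorname{sign}(h(\eta_b)-h(\eta(b)))=\operatorname{sign}(\eta_b-\eta(b))=\operatorname{sign}(g_{\mathrm{in}}(b,\eta_b)-g_{\mathrm{in}}(b,\eta(b)))=\operatorname{sign}(\delta_X-g(b))$, so $\Phi$ increases up to $b^*$ and decreases afterwards, identifying $b^*$ as the unique DRBS optimizer.

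\emph{Main obstacle.} Steps establishing the reduction and the KL closed form are standard for distributionally robust optimization; the substantive work is the monotonicity step, specifically the two sign inequalities $\partial_F g_{\mathrm{in}}>0$ and $L'(b)>0$ --- precisely where Assumptions 3 and 2 are consumed --- together with the careful bookkeeping needed for those $b$ with $\ubar F(b)=0$ (where $\Phi\equiv0$), so that the interior first-order analysis and the global-maximum conclusion remain valid on the whole of $[0,\bar v]$.
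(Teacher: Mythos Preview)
Your proof is correct and reaches the same conclusion as the paper, but the execution differs in two places that are worth noting.

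\textbf{Inner adversarial problem.} The paper invokes the Lagrangian dual of the KL-constrained problem (Theorem~4 of Hu--Hong), introduces the dual variable $\lambda\le 0$, and optimizes jointly over $(b,\lambda)$; the substitution $\eta=e^{-1/\lambda}$ then produces the two stationarity equations. You instead reduce $\ubar F(b)$ to a \emph{primal} binary-KL problem via the data-processing inequality and introduce $\eta_b$ as the odds ratio $F(1-q^*)/(q^*(1-F))$. Both routes land on the same $\eta$ and the same pair of equations $h(\eta)=L(b)$ and $g_{\mathrm{in}}(b,\eta)=\delta_X$; yours avoids the dual machinery at the cost of a short sufficiency argument for the binary reduction.

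\textbf{Monotonicity of $g$.} The paper reparametrizes $F$ as a function of $\eta$ (using the stationarity relation) and proves $\tilde g'(x)>0$ via the scalar inequality $-x(1+x/2)+(1+x)\log(1+x)\le 0$ together with $1-F>1/2$. Your chain-rule decomposition $g'(b)=\partial_F g_{\mathrm{in}}\cdot f(b)+\partial_\eta g_{\mathrm{in}}\cdot \eta'(b)$, with the clean identities $\partial_\eta g_{\mathrm{in}}=F(1-F)\log\eta/J^2$ and $\partial_F g_{\mathrm{in}}=((\eta-1)J-\eta\log\eta)/J^2$, is more transparent about where each assumption is consumed: Assumption~2 forces $\eta'>0$, Assumption~3 forces $J>(\eta+1)/2$ and hence $\partial_F g_{\mathrm{in}}>0$ via $(\eta^2-1)/2>\eta\log\eta$. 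You also add a final global step the paper omits, namely $\operatorname{sign}\Phi'(b)=\operatorname{sign}(\delta_X-g(b))$, which certifies that the unique root is the global maximizer rather than merely a critical point.

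One very minor slip: $\mathrm{kl}(\cdot,F)$ decreases from $-\log(1-F)$ at $q=0$ to $0$ at $q=F$, not the other way around; your subsequent use of it is correct.
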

We defer the proof of this result to Section \ref{proof}.
The monotonicity of functions $g$ and $r$ brings not only practical convenience but also theoretical insight. Intuitively, a robust bid shading policy should always bid lower than the baseline policy to stay safe when things go wrong. The following result refutes this intuition and gives us a general rule to bid under different sources of uncertainty.
\begin{corollary}
\label{cor}
    Under Assumptions 1-3, the DRBS solution $b^*$ is increasing in $\delta_X$ but decreasing in $\delta_V.$
\end{corollary}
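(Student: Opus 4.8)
The plan is to read everything off the characterization in Theorem~\ref{main}: $b^*$ is the unique root in $[\ubar{v},\bar{v}]$ of $g(b)=\delta_X$, with $g$ strictly increasing there. Since $g$, $\ubar{v}$ and $\bar{v}$ do not involve $\delta_X$ at all ($\delta_X$ only sets the level), monotonicity in $\delta_X$ is immediate: raising $\delta_X$ (within the admissible range of Assumption~1) raises the root of the increasing function $g$, i.e.\ $db^*/d\delta_X=1/g'(b^*)>0$.

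For $\delta_V$, the key structural fact is that $\delta_V$ enters the whole construction --- $g$, $\ubar{v}$ and $\bar{v}$ --- \emph{only} through $\bar{v}=(v/\bar{p})\,r^{-1}(\delta_V)$, and $r$ strictly decreasing makes $\bar{v}$ strictly decreasing in $\delta_V$. So it suffices to show $b^*$ is strictly increasing in $\bar{v}$ (all else fixed), which I would obtain by implicit differentiation of $g(b^*;\bar{v})=\delta_X$: since $g(\cdot;\bar{v})$ is strictly increasing at $b^*$ by Theorem~\ref{main}, it remains to show $\partial g/\partial\bar{v}<0$ at $b^*$. Now $\bar{v}$ enters $g$ only through $L(b)=F(b)/((\bar{v}-b)f(b))$, which decreases in $\bar{v}$; hence through $\eta=h^{-1}(L)$, and since $h(x)=(x-1)/\log x$ is strictly increasing (an elementary one-variable check), $\eta$ decreases in $\bar{v}$. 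Holding $F=F(b)$ fixed and treating $g$ as a function of $\eta$ and $J=F+\eta(1-F)$, a short cancellation gives $dg/d\eta=F(1-F)(\log\eta)/J^2$, so $\partial g/\partial\bar{v}$ has the sign of $-\log\eta(b^*)$.

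The whole argument therefore hinges on $\eta(b^*)>1$. This holds because $b^*>\ubar{v}$ --- since $g(\ubar{v})=0$ (at $b=\ubar{v}$ the baseline optimality condition $F=(\bar{v}-b)f$ forces $L=\eta=J=1$) while $g(b^*)=\delta_X>0$ and $g$ is increasing --- together with $L(b)>1$ for every $b\in(\ubar{v},\bar{v})$, which is exactly the statement that the log-concave (hence unimodal) baseline objective $(\bar{v}-b)F(b)$ is strictly decreasing past its maximizer $\ubar{v}$; thus $\eta(b^*)=h^{-1}(L(b^*))>h^{-1}(1)=1$. Consequently $\partial g/\partial\bar{v}<0$ at $b^*$, so $db^*/d\bar{v}>0$ and $db^*/d\delta_V=(db^*/d\bar{v})(d\bar{v}/d\delta_V)<0$. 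I expect the main obstacle to be precisely this sign analysis: $dg/d\eta$ changes sign at $\eta=1$, so the conclusion genuinely requires locating $b^*$ strictly to the right of the non-robust bid $\ubar{v}$; the remaining ingredients (the cancellation yielding $dg/d\eta$, checking $F(b^*)\in(0,1)$ via Assumption~3, and the monotonicities of $r$ and $h$) are routine.
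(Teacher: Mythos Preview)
Your argument is correct and aligns with the paper's treatment: the paper does not give a separate proof of the Corollary but presents it as an immediate consequence of the monotonicity of $g$ (from Theorem~\ref{main}) and of $r$. For the $\delta_V$ direction the paper leaves implicit the step that $b^*$ is increasing in $\bar v$; your implicit-differentiation computation $\partial g/\partial\eta\big|_{F\ \text{fixed}}=F(1-F)(\log\eta)/J^2>0$ at $b^*>\ubar v$ (hence $\eta(b^*)>1$) supplies exactly that missing link.
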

Here we present a possible explanation of this result. When we are uncertain about the competitive landscape (only $\delta_X>0$), the competition is more fierce than expected in the worst case, so we should bid higher than the baseline ($\delta_V,\delta_X=0$) to maintain our win rate. When we are uncertain about the value (only $\delta_V>0$), the opportunity is less valuable than expected in the worst case, so we should bid lower than the baseline to maintain a positive profit margin. When the two sources of uncertainty are present at the same time ($\delta_V,\delta_X>0$), the adjustment made by the DRBS policy can go in either direction, making it a decision maker, deciding whether the baseline policy bids too high or too low. 

Now we argue that the DRBS policy is indeed a reasonable decision maker. The value of an ad opportunity, as the expectation of a scaled Bernoulli random variable, is the product of the click probability and click reward. In general, the click reward of a certain ad is a constant set by its advertiser, so whether an opportunity is valuable to this ad is determined by the click probability. When the click probability is too high or too low, the following result shows that the DRBS policy knows the correct direction in which the baseline policy should be adjusted. 
\begin{proposition}
\label{prop}
    Given an ad opportunity with everything fixed except its click probability $\bar{p}$, the DRBS policy bids higher (or lower) than the baseline policy when $\bar{p}$ is large (or small) enough.
\end{proposition}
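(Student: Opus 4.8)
Write $a:=v/\bar p$ for the fixed click reward; as $\bar p$ varies (with $a,F,f,\delta_V,\delta_X$ held fixed) the true value is $v=a\bar p$ and the worst-case value is $\bar v=a\,r^{-1}(\delta_V)$. By the \emph{baseline policy} I mean $b_0$, the unique root of $F(b)=(v-b)f(b)$, equivalently $\Phi(b_0)=v$ with $\Phi(b):=F(b)/f(b)+b$ continuous and strictly increasing (log-concavity); thus $b_0$ is a continuous strictly increasing function of $v$ and $0<b_0<v$. The plan is to follow the pair $(v,\bar v)$ across the range of $\bar p$ permitted by Assumptions 1--3 and then locate $b^*\in[\ubar v,\bar v]$ relative to $b_0$ through the monotone function $g$ of Theorem~\ref{main}. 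The governing fact is that $\bar v=a\,r^{-1}(\delta_V)$ sweeps essentially the whole interval $(0,a)$: as $\bar p\uparrow1$ the binary divergence $r(q)=D(\text{Ber}(q)\,\|\,\text{Ber}(\bar p))$ tends to $+\infty$ for every fixed $q<1$, so $r^{-1}(\delta_V)\uparrow1$, hence $\bar v\uparrow a$, and since $v=a\bar p\uparrow a$ as well the gap $v-\bar v\to0$; as $\bar p$ decreases toward the left end of the admissible range the constraint $\delta_X<-\log(1-F(\bar v))$ forces $\bar v$ down to $F^{-1}(1-e^{-\delta_X})$ (small for small $\delta_X$), while $v=a\bar p$ stays bounded away from $0$. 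Thus near the top of the range the value uncertainty is negligible relative to the value, leaving only the landscape uncertainty to act; near the bottom the value uncertainty dominates.

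\emph{Bids higher, $\bar p$ near $1$.} Since $g$ is strictly increasing on $[\ubar v,\bar v]$ with $b^*$ its unique root of $g=\delta_X$; since $\ubar v\le b_0$ always ($\Phi$ increasing, $\bar v\le v$); and since $b_0\to b_a:=\Phi^{-1}(a)$ with $b_a<a$ (as $\Phi(b)>b$) while $\bar v\to a$, giving $b_0<\bar v$ for $\bar p$ near $1$ -- it suffices to show $g(b_0)<\delta_X$ there. The baseline identity $F(b_0)=(v-b_0)f(b_0)$ yields $L(b_0)=F(b_0)/[(\bar v-b_0)f(b_0)]=(v-b_0)/(\bar v-b_0)\to(a-b_a)/(a-b_a)=1$, hence $\eta(b_0)=h^{-1}(L(b_0))\to h^{-1}(1)=1$, hence $J(b_0)=F(b_0)+\eta(b_0)(1-F(b_0))\to1$, so $g(b_0)=\log\eta(b_0)-\log J(b_0)-F(b_0)\log\eta(b_0)/J(b_0)\to0$. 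As $\delta_X>0$ is fixed, $g(b_0)<\delta_X$ once $\bar p$ is close enough to $1$, and monotonicity of $g$ gives $b^*>b_0$.

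\emph{Bids lower, $\bar p$ near the left end of the admissible range.} Here $b^*\le\bar v$ (Theorem~\ref{main}) and $\bar v\to F^{-1}(1-e^{-\delta_X})$, while $b_0$ tends to the baseline solution at the smallest admissible value $v$; and $F^{-1}(1-e^{-\delta_X})$ lies below that baseline solution exactly when the corresponding baseline win rate exceeds $1-e^{-\delta_X}$, a mild condition for a non-negligible opportunity since $\delta_X$ is a small radius. Under it, $b^*\le\bar v<b_0$ for all such $\bar p$ sufficiently close to the endpoint. (Equivalently, $g(\bar v)=-\log(1-F(\bar v))\downarrow\delta_X$ at that endpoint, so the root $b^*$ of $g=\delta_X$ is squeezed up toward $\bar v$, which still stays below $b_0$.)

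The main obstacle is the limit $g(b_0)\to0$ in the first case: it chains (i) $r^{-1}(\delta_V)\to1$ as $\bar p\to1$ (a property of the inverse binary-KL map), (ii) continuity and monotonicity of $v\mapsto b_0$ with $b_0$ interior to the support of $X$, and (iii) the passage to the limit through $L$, then $\eta=h^{-1}\!\circ L$, then $J$ inside $g$; each step is elementary but the composition must be carried out carefully, and one must keep Assumptions 1--3 in force as $\bar v\uparrow a$ (in particular $F(\bar v)<1/2$ and $\delta_X<-\log(1-F(\bar v))$), which is what ``$\bar p$ large enough'' must mean precisely. The ``bids lower'' direction is comparatively routine once the left endpoint of the admissible range is identified as the point where $F(\bar v)=1-e^{-\delta_X}$.
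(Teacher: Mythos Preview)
Your argument for the $\bar p\to1$ direction is correct and in fact more hands-on than the paper's: you push the baseline optimizer $b_0$ through the explicit formula of Theorem~\ref{main} and obtain $g(b_0)\to0<\delta_X$, hence $b^*>b_0$ by monotonicity of $g$. The paper instead argues by uniform convergence of the DRBS and baseline objectives $D(\,\cdot\,;\bar v)$ and $B(\,\cdot\,;v)$ to their common limits $D(\,\cdot\,;a)$ and $B(\,\cdot\,;a)$ as $\bar v,v\to a$, so that the two optimizers converge to $b^*(a)$ and $b_*(a)$, and then invokes Corollary~\ref{cor} (with the value uncertainty effectively switched off) to get $b^*(a)>b_*(a)$. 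Both routes are valid; yours stays entirely inside the Theorem~\ref{main} machinery, while the paper's is a softer compactness/continuity argument that does not re-open the formula.

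The $\bar p$-small direction, however, has a genuine gap. You confine yourself to the admissible range for Assumptions~1--3 and try to pin $b^*\le\bar v$ below $b_0$ at the left endpoint where $\delta_X=-\log(1-F(\bar v))$. But the required inequality $\bar v<b_0$ at that endpoint is \emph{not} automatic---you yourself flag it as ``a mild condition''---so your conclusion is conditional, not proved. The paper sidesteps this entirely by stepping \emph{outside} the admissible range: once $\bar p\le 1-e^{-\delta_V}$ one has $\delta_V\ge r(0)=-\log(1-\bar p)$, hence $r^{-1}(\delta_V)=0$, so the worst-case value $\bar v=(v/\bar p)\,r^{-1}(\delta_V)$ collapses to $0$ and the DRBS policy stops bidding altogether, while the baseline still bids positively since $v=a\bar p>0$. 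That one-line observation yields the ``bids lower'' claim unconditionally; your left-endpoint analysis is both more delicate and strictly weaker.
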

We defer the proof of this result to Section \ref{proof}. In the proof, we see that $\delta_X$ (or $\delta_V$) takes the lead to bid higher (or lower) when the click probability is too high (or low), which supports our claim: having two ambiguity sets is essential.
This claim will be further supported when we discuss the experimental design in Section \ref{experiment design}.
The potential decision making wisdom of the DRBS policy indicated in Proposition \ref{prop} will be fully illustrated when we present experimental results in Section \ref{experiment results}.
\subsection{Efficient implementation}
In this section, we discuss how to compute the DRBS solution $b^*$ efficiently. Since it is the root of monotonic function $g-\delta_X$ in $[\ubar{v},\bar{v}]$, it can be efficiently found via bisection, if $g,\ubar{v},\bar{v}$ do not bring much trouble.

For $g$, we briefly talk about the computation of $h^{-1}$, which can be expressed in terms of the Lambert W function that solves $W(z)e^{W(z)}=z$. To be specific,
$$h^{-1}(y)=-y\cdot W_{-1}(-e^{-1/y}/y),\;\;y>1$$
where $W_{-1}$ is the secondary real-valued branch of $W$ \cite{corless1996lambert}. There are efficient implementations of $W$ (e.g. scipy.special.lambertw). In addition, utilizing the bound in \cite{chatzigeorgiou2013bounds}, we have
$$j_{2/3}(y)\leq h^{-1}(y)\leq j_1(y),\;\;y>1$$
where $$j_c(y)=y\prs{1+\sqrt{2}\cdot l(y)+c\cdot l^2(y))},\;\;l(y)=\sqrt{1/y+\log y-1},$$so we may obtain a decent approximation of $h^{-1}$ by properly choosing $c\in[2/3,1]$.

For $\bar{v}$, we use bisection to solve $r(p)=\delta_V$ in $[0,\bar{p}]$. For $\ubar{v}$, we do not need to compute this value.
From the proof of Theorem \ref{main}, we know that $b\geq\ubar{v}\Leftrightarrow L(b)\geq1\Rightarrow g(b)\geq0  $, so we can set $g(b)=0$ whenever $L(b)<1$, which allows us to directly perform bisection in $[0,\bar{v}]$ to find $b^*$.

\section{Experimental Design}
\label{experiment design}
\subsection{Dataset}
We compare the DRBS and baseline policy on a Yahoo DSP private bidding dataset on Google Ad Exchange, which contains the information of two millions of bid requests, with the following fields.
\begin{table}[h]
\caption{Data description}
\label{data description}
\begin{center}
\begin{small}
\begin{tabular}{cp{5.5cm}}
\toprule
Field & Description\\
\midrule
\midrule
line\_id  & ID of the line corresponding to the ad that the DSP wants to win the opportunity for\\
\midrule
ceiling, floor& {Range of allowed bid prices}\\
mu, sigma & {Two estimated lognormal parameters of the distribution of the minimum winning price} \\
click\_prob & {Estimated probability of the ad being clicked}\\
click\_reward & {Reward if the ad is actually clicked}\\
\midrule
value & {Product of the above two}\\
min\_win\_price & {Actual minimum winning price}\\
\bottomrule
\end{tabular}
\end{small}
\end{center}
\end{table}

Different lines, identified by line ID, correspond to different campaigns to display different ads to viewers, so we should hold the competition between the two policies line by line. Given that there are more than one thousand lines in the dataset, those competition results will be summarized into a weighted average.




The distribution of minimum winning price is assumed to be lognormal, and the two parameters are estimated using the method in \cite{zhou2021efficient}. The value, as the expectation of a scaled Bernoulli random variable, is the product of the click probability and click reward.

\subsection{Metric}
The last two rows in Table \ref{data description}, the value and the actual minimum winning price, allow us to evaluate the performance of the two policies. The performance metric we use is
$$R=\frac{\sum_i v_iI(X_i\leq b_i)}{\sum_i b_iI(X_i\leq b_i)}$$
where $v$ is the value, $b$ is the bid price, $X$ is the minimum winning price, and the summation is over all bid requests in one line. The numerator of $R$ is the total value collected from auctions won, while the denominator of $R$ is the total spend to win those auctions, so $R$ can be interpreted as the effective value per dollar spent, which is actually the reciprocal of the effective cost per action (eCPX). We use the reciprocal because it is positively related to the performance. The larger the $R$, the better the performance. For line $k$, let $R^D_k$ and $R^B_k$ be the $R$-values corresponding to the DRBS and baseline policies, respectively. Then $$\Delta R_k=\prs{R^D_k/R^B_k-1}\times100\%$$
is the percentage improvement of the new policy over the old one. We use the spend-weighted average
$$\Delta R=\frac{\sum_{k}s_k\Delta R_k}{\sum_{k}s_k}$$
to represent the overall percentage improvement, where the summation is over all lines, and $s_k$ is the total spend of the DRBS/baseline policy on line $k$. The two policies have the same total spend because of spend equating.

Spend equating is crucial to make the offline comparison meaningful in online sense. Advertisers require the DSP to steadily spend all their budgets in given time periods, so the DSP needs to spend at a certain rate, which is enforced by a control algorithm that modifies $v$'s regularly (e.g. daily). As a result, different policies are regulated by the controller to spend roughly the same on average. 
The only way for one policy to outperform the other is spending the same budget more efficiently (i.e. generating higher surplus per unit of budget). 
To mimic the controller in offline experiments, a standard practice is uniformly modifying all $v$'s for the new policy to make it spend the same as the old policy, which is called spend equating. This modification violates our problem formulation where the two policies are facing the same set of $v$'s. However, this is avoided by balancing $\delta_X$ and $\delta_V$ as they affect the spend oppositely (Corollary \ref{cor}). 

The reader may wonder what is the interpretation of using $\delta_X$ and $\delta_V$, which represent the degree of distributional uncertainty in our formulation, to equalize spending. One way to interpret this mechanism is the following. Advertisers implicitly reveals a risk aversion/appetite profile when they require a certain spend rate (lower spend rate corresponds to higher risk aversion). The controller's goal is to implicitly find the risk aversion coefficient. Thus, the risk/uncertainty parameters that our bid-shading model incorporates, play the same role as the controller in our offline experiments. Thus, there is one degree of freedom (out of the two parameters, $\delta_X$ and $\delta_V$) that we need to select. The other parameter is set to equalize spending. We now explain how the radii are chosen explicitly.

\subsection{Choice of radii}
We randomly split the dataset into a training set and a test set. We compute $\delta_X$ and $\delta_V$ on the training set and compare the two policies on the test set. 
Since the unknown distribution of $X$ seems to contain more uncertainty than the scaled Bernoulli distribution of $V$ (given that the former is highly censored and the later is only a two point distribution), we first choose $\delta_X$ (total spend increases) to maximize the total surplus on the training set $$\sum_i(v_i-b^*_i(\delta_X,0))I(X_i\leq b^*_i(\delta_X,0))$$ then choose $\delta_V$ (total spend decreases) to equate the spend on the training set$$\sum_i b^*_i(\delta_X,\delta_V)=\sum_i b^*_i(0,0)$$
where $b^*_i(0,0)$ corresponds to the baseline policy. 

The resulting DRBS policy, with both $\delta$'s positive, handles the two sources of uncertainty while maintaining the same spend rate as the baseline policy. This is why having two ambiguity sets is essential. If there is only one ambiguity set (one of the two $\delta$'s is zero), the corresponding DRBS policy either always bids higher or always bids lower than the baseline policy (Corollary \ref{cor}), so the approach would consistently either overbid or underbid relative to the non-robust policy. With a balanced pair of positive $\delta$'s, the DRBS policy can be fairly compared with the baseline policy. And also it captures both the implicit risk while balancing overbiding vs underbiding in relation to the value and competitiveness inherent in the ad opportunity.

The same spend on the training set does not guarantee the same spend on the test set.
When the two policies are compared on the test set, spend equating (uniformly modifying all $v$'s) is still needed to make sure that they spend exactly the same, but it will have minimal effect as the spend difference on average has already been removed (assuming that the training set represents the test set well). 

\section{Experimental Results}
\label{experiment results}
\subsection{Overall comparison}
The above strategy can be applied to the whole dataset to find a pair of universal $\delta$'s. It can also be applied line by line to obtain line-specific $\delta's$. We begin with the first approach. On 25\% of the whole dataset, we compute universal $(\delta_X,\delta_V)=(\text{0.065,1.9e-5})$ where $\delta_X$ maximizes the surplus (Figure \ref{surplus_delta}) while $\delta_V$ equates the spend (Figure \ref{spend_difference}). In Figure \ref{surplus_delta}, the DRBS surplus (as a function of $\delta_X$) increases rapidly at the beginning as almost winning bids become winning, while it decreases rapidly at the end as winning bids become overpaid. The curve is noisy in the middle as the two effects compete. In Figure \ref{spend_difference}, the DRBS spend (as a function of $\delta_V$) is smoother as it is irrelevant to noisy $v$'s. Note that when $\delta_V=0$ the DRBS policy spends 40\% more than the baseline policy to maximize the surplus, which will be completely twisted by brute-force spend equating (uniformly modifying all $v$'s). By increasing $\delta_V$, we reduce the overspending in the name of handling the uncertainty in $v$'s, resolving the two problems simultaneously. 
\begin{figure}[ht]
    \centering
    \begin{minipage}{.5\textwidth}
        \centering
        \includegraphics[width=.9\linewidth]{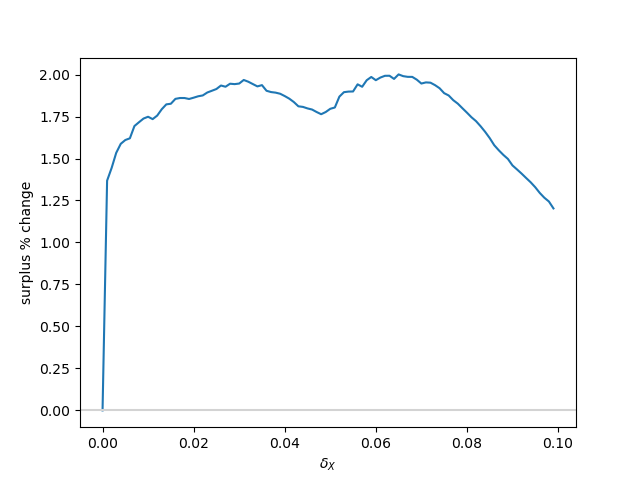}
        \caption{Surplus change} 
        \label{surplus_delta}
    \end{minipage}%
    \begin{minipage}{.5\textwidth}
        \centering
        \includegraphics[width=.9\linewidth]{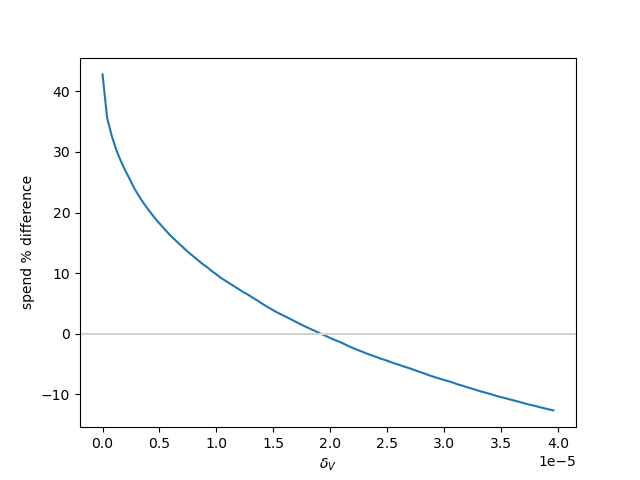}
        \caption{Spend difference}
        \label{spend_difference}
    \end{minipage}
\end{figure}

Equipped with universal $\delta's$, the DRBS policy outperforms the baseline policy by $\Delta R=0.34\%$ on the rest 75\% of the whole dataset. For the five lines with the largest weights, their $\Delta R_k$'s are in the second row of Table \ref{performance} (with superscript $U$).
Universal $\delta$'s favor the two largest lines on the test set, which is natural as they affect $\delta$'s the most on the training set. The performance on the two largest lines confirms that the DRBS policy with properly chosen $\delta$'s can outperforms the baseline policy, while the performance on the other three lines motivates us to consider line-specific $\delta$'s. Equipped with line-specific $\delta$'s, the DRBS policy outperforms the baseline policy by $\Delta R=0.65\%$. For the five lines with the largest weights, their $\Delta R_k$'s are in the third row of Table \ref{performance} (with superscript $L$). 
For lines that benefit from DRBS, line-specific $\delta$'s boost the gain by choosing more suitable $\delta's$.
For lines that do not benefit from DRBS, line-specific $\delta$'s stop the loss by setting $\delta's$ to zero.
\begin{table}[h]
\caption{Performance on the largest 5 lines}
\label{performance}
\begin{center}
\begin{small}
\begin{tabular}{c|ccccc|c}
\toprule
Spend weight & 8.6\% & 8.6\% & 6.6\% & 4.8\% & 4.8\% & Overall\\
\midrule
$\Delta R_k^U$ & 0.8\% & 2.2\% & -1.4\% & -0.3\% & -0.8\% & 0.34\%\\
\midrule
$\Delta R_k^L$ & 1.0\% & 2.3\% & 0.0\% & 0.0\% & 0.0\% & 0.65\%\\
\bottomrule
\end{tabular}
\end{small}
\end{center}
\end{table}
\subsection{V/B analysis}
Recall that $R$ is the ratio between the total value collected from auctions won (sum of $v$'s) and the total spend to win those auctions (sum of $b$'s).
Due to spend equating, it is not allowed to increase (or decrease) the denominator hoping that the numerator increases faster (or decrease slower). The only way to increase $R$ is winning a different set of $(v,b)$-pairs where the sum of $v$'s increases but the sum of $b$'s stays unchanged. Therefore, a good policy should bid higher to win pairs with high $v/b$ while bid lower to lose pairs with low $v/b$ where the former is financed by the latter. 
As Proposition \ref{prop} suggests, the DRBS policy should have the above intuition.
This is confirmed by the following experimental results.

To begin, we focus on the largest line and see whether the DRBS policy exchanges low-$v/b$ wins for high-$v/b$ wins. In Table \ref{difference in wins}, $X$ is the minimum winning price, bid prices $(b^B,b^D)$ are given by the two policies, $v$ is the value, $v/b$ is the ratio between the value and the winning bid price (marked by *). The first five rows are sampled from the opportunities where the DRBS policy wins but the baseline policy loses. The last five rows are sampled from the opportunities where the baseline policy wins but the DRBS policy loses. Clearly, the former has higher $v/b$'s than the latter, which shows that the DRBS policy does exchange low-$v/b$ wins for high-$v/b$ wins. 
In total, the DRBS policy exchange 15121 low-$v/b$ wins with $v/b$-average 1.46 for 1717 high-$v/b$ wins with $v/b$-average 2.36.
Note that $b^D$ is zero when the worst-case value $\bar{v}$ drops below the floor, so the opportunity is no longer worth bidding for.

\begin{table}[ht]
\caption{Difference in wins in the largest line}
\label{difference in wins}
\begin{center}
\begin{small}
\begin{tabular}{p{1cm}p{1cm}p{1cm}p{1cm}p{1cm}p{1cm}}
\toprule
$X$ & $b^B$ & $b^D$ & $v$ & $v/b$\\
\midrule
\midrule
2.47 & 2.29 & 2.49* & 6.64 & 2.67\\
0.59 & 0.58 & 0.60* & 0.89 & 1.48\\
2.09 & 2.00 & 2.20* & 6.83 & 3.10\\
2.96 & 2.93 & 3.11* & 10.25 & 3.30\\
1.78 & 1.77 & 1.86* & 4.22 & 2.27\\
\midrule
0.13 & 0.13* & 0.00 & 0.25 & 1.92\\
0.13 & 0.13* & 0.00 & 0.25 & 1.92\\
0.34 & 0.34* & 0.00 & 0.40 & 1.18\\
0.32 & 0.33* & 0.00 & 0.43 & 1.30\\
0.24 & 0.24* & 0.24 & 0.39 & 1.63\\
\bottomrule
\end{tabular}
\end{small}
\end{center}
\end{table}

To further illustrate the exchange made by the DRBS policy, we view opportunities won as $(v,b)$ pairs and plot them with different colors indicating which policy wins. This is done for the 15 largest lines in Figure \ref{multiplot} where it is clear that the DRBS policy exchanges small-slope points (orange) for large-slope points (blue) in most lines. Proposition \ref{prop} provides a theoretical justification for these smart exchanges.

\begin{figure*}[ht]
\begin{center}
\centerline{\includegraphics[width=1.2\columnwidth]{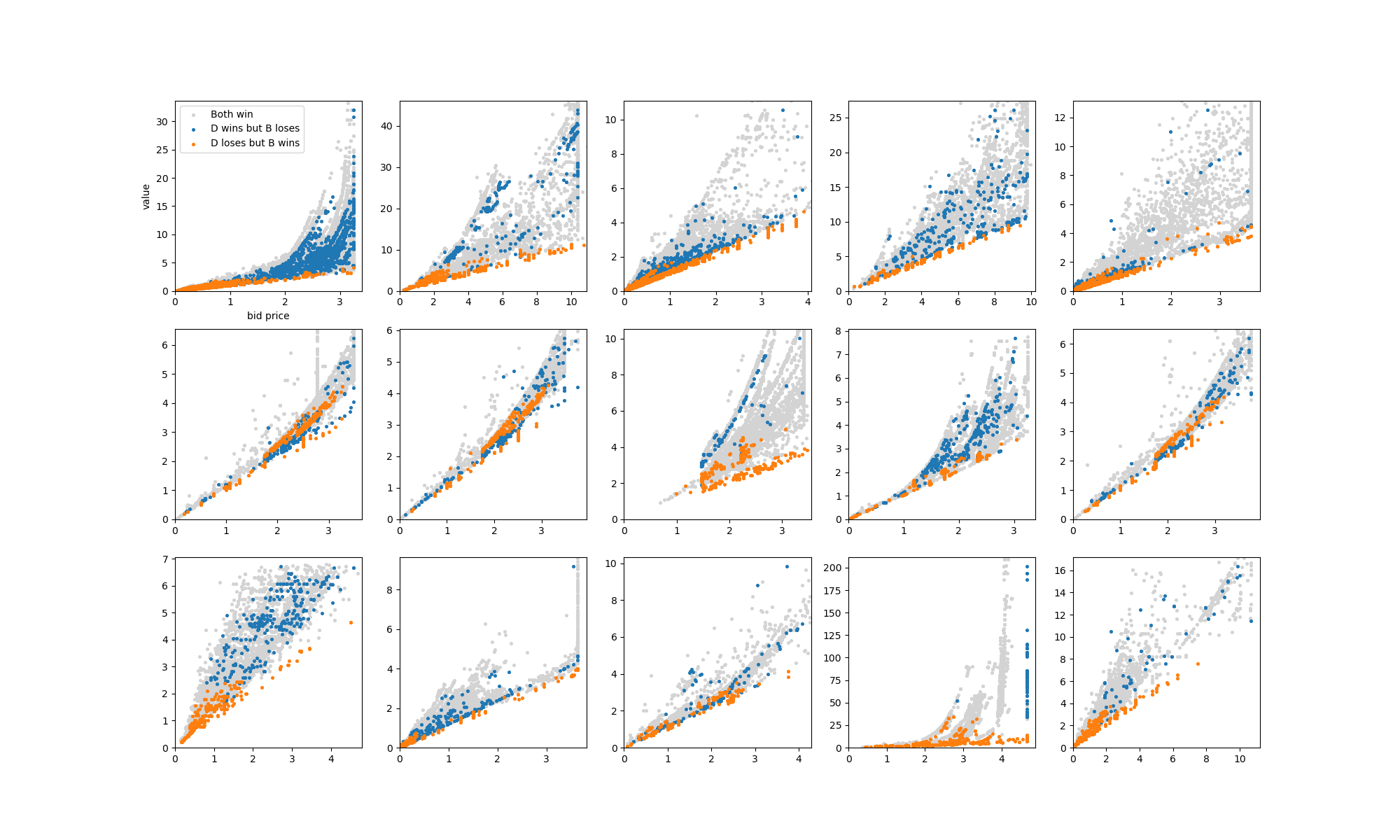}}
\caption{Difference in wins in the largest 15 lines (D: DRBS, B: baseline)}
\label{multiplot}
\end{center}
\end{figure*}

\section{Conclusions}
We presented a distributionally robust bid shading approach built from two KL-based distributional uncertainty regions which accounts for noise estimation in valuation and win rate. We show that the resulting max-min game in which the bidder and adversary wish to optimize the expected surplus can be computed explicitly, resulting in the proposed bid shading policy. 

Further, introducing two uncertainty regions is essential because noise estimation in value and win rate, respectively, affect the optimal bidding in opposite directions (everything else equal, if the competition is more uncertain and it is profitable to bid higher, we should increase the bid; similarly, an uncertain valuation demands more caution and thus a lower bid). This is captured directly by the resulting formula in our max-min DRO formulation. The size of each uncertainty region can be directly interpreted as the impact of the noise in each of these two sources of uncertainty. We further explain how to calibrate the distributional uncertainty size in our experiments, equating spending in order to fairly compare with the non robust setting but also balancing the risk trade-offs mentioned earlier. We believe that this type of multi-layer DRO procedure (i.e. multiple uncertainty regions) has wide applicability in other areas, not only online advertising, and believe that the study of these types of extensions is worth of future research.

\section{Proofs}
\label{proof}
\begin{proof}[Proof of Theorem \ref{main}]
Since $V$ and $X$ are independent,
\begin{align*}
    &\min_{\substack{P\in\mathcal{P}(\delta_V)\\Q\in\mathcal{Q}(\delta_X)}}\E_{P,Q} (V-b)I(X\leq b)=\min_{P\in\mathcal{P}(\delta_V)}\E_P (V-b)\min_{Q\in\mathcal{Q}(\delta_X)}P_Q(X\leq b).
\end{align*}
For the first term, recall that $V$ is a scale Bernoulli random variable, i.e. $V=a\cdot\mathrm{Ber}(p)$ for some $a>0$ and $p\in[0,1].$ In particular, $v=\E_{\bar{P}}V=\bar{a}\bar{p}.$ Then
\begin{align*}
    \bar{v}=&\min_{P\in\mathcal{P}(\delta_V)}\E_PV=\min_{p\in[0,1]:r(p)\leq\delta_V}\bar{a}p=(v/\bar{p})r^{-1}(\delta_V),
\end{align*}
where we use two facts: all $P$'s in $\mathcal{P}(\delta_V)$ have the same support, and $r(p)$ is strictly decreasing in $[0,\bar{p}]$.
By Assumption 1, $\bar{v}>0.$
For the second term, by Theorem 4 in \cite{hu2012kullback}, we have
\begin{align*}
    &\min_{Q\in\mathcal{Q}(\delta_X)}P_Q(X\leq b)\\
    =&\max_{\lambda\leq0}\sbk{\lambda \delta_X+\lambda\log\prs{\E_{\bar{Q}}e^{I(X\leq b)/\lambda}}}\\
    =&\max_{\lambda\leq0}\lambda \sbk{\delta_X+\log\prs{1+F(b)(e^{1/\lambda}-1)}}.
\end{align*}
By Assumption 1 and 
$$\inf_{\lambda\leq0\leq b\leq\bar{v}}F(b)(e^{1/\lambda}-1)=-F(\bar{v}),$$
the above maximum is strictly positive for some $b\in(0,\bar{v})$.
Now the DRBS problem becomes
\begin{align*}
    0<&\max_{b\geq0}\min_{\substack{P\in\mathcal{P}(\delta_V)\\Q\in\mathcal{Q}(\delta_X)}}\E_{P,Q} (V-b)I(X\leq b)\\
    =&\max_{b\geq0}(\bar{v}-b)\min_{Q\in\mathcal{Q}(\delta_X)}P_Q(X\leq b)\\
    =&\max_{\lambda\leq0\leq b}(\bar{v}-b)\lambda\sbk{\delta_X+\log\prs{1+F(b)(e^{1/\lambda}-1)}}.
\end{align*}
From the last line, we know that the maximum can not be achieved when $\lambda=0.$
Since $F(0)=0$ (Assumption 3), the maximum can not be achieved when $b=0$ either. Therefore, the optimizer $(b^*,\lambda^*)$ must satisfy the first order optimality condition
\begin{align*}
    0=&\delta_X+\log\prs{1+F(b)(e^{1/\lambda}-1)}-\frac{F(b)e^{1/\lambda}(1/\lambda)}{1+F(b)(e^{1/\lambda}-1)},\\
    0=&-\sbk{\delta_X+\log\prs{1+F(b)(e^{1/\lambda}-1)}}+(\bar{v}-b)\frac{f(b)(e^{1/\lambda}-1)}{1+F(b)(e^{1/\lambda}-1)}.
\end{align*}
Let $\eta=e^{-1/\lambda}\geq1$. Then
\begin{align*}
    0=&\delta_X+\log\prs{1+F(b)(1/\eta-1)}+\frac{F(b)\log\eta}{\eta+F(b)(1-\eta)},\\
    0=&F(b)\log\eta+(\bar{v}-b)f(b)(1-\eta).
\end{align*}
Let $h(\eta)=(\eta-1)/\log \eta$. Then the second equation becomes
\begin{align*}
    \eta=h^{-1}(L(b))=h^{-1}\prs{\frac{F(b)}{(\bar{v}-b)f(b)}},
\end{align*}
and the first equation becomes
\begin{align*}
    \delta_X=g(b)=&\log\eta(b)-\log\prs{F(b)+\eta(b)-F(b)\eta(b)}-\frac{F(b)\log\eta(b)}{F(b)+\eta(b)-F(b)\eta(b)}.
\end{align*}
Since $F$ is log-concave (Assumption 2), 
$$L(b)=\frac{1}{(\bar{v}-b)(\log F(b))'}$$
is strictly increasing in $[0,\bar{v}]$.
Recall that $\ubar{v}$ is the solution of $L(b)=1$.
Since $h$ is strictly increasing in $[0,\infty)$, 
we have $L(b^*)=h(\eta(b^*))\geq h(1)=1$ that leads to $b^*\geq\ubar{v}.$
The next task is proving that $g$ is strictly increasing in $[\ubar{v},\bar{v}]$.
Since $h^{-1}$ is strictly increasing in $[0,\infty)$,
$\eta(b)=h^{-1}(L(b))$ is strictly increasing in $[0,\bar{v}]$. Therefore, it suffices to show that $g$ is strictly increasing as a function of $\eta\geq1$.
We rewrite $g$ as
\begin{align*}
    g=&\frac{(1-F)\eta\log\eta}{1+(1-F)(\eta-1)}-\log(1+(1-F)(\eta-1))
\end{align*}
From $h(\eta(b))=F(b)/[(\bar{v}-b)f(b)],$
we know that $F$ is a strictly increasing function of $\eta$.
Let $x=\eta-1\geq0$ and $y=1-F\in[1-F(\bar{v}),1-F(\ubar{v})]$. 
Since $F(\bar{v})<1/2$ (Assumption 3), we have $y'<0$ and $1/2<y\leq1.$
The goal is to show that
$$\tilde{g}(x)=\frac{y(1+x)\log(1+x)}{1+xy}-\log(1+xy)$$
is strictly increasing in $x$. We compute its derivative
\begin{align*}
    \tilde{g}'(x)=&\frac{1}{1+x}\frac{y(1+x)}{1+xy}+\log(1+x)\prs{\frac{y(1+x)}{1+xy}}'-\frac{y+xy'}{1+xy}\\
    =&-\frac{xy'}{1+xy}+\log(1+x)\frac{y'(1+x)+y}{1+xy}-\log(1+x)\frac{y(1+x)(y+xy')}{(1+xy)^2}\\
    =&\frac{y'(-x(1+xy)+(1+x)\log(1+x))}{(1+xy)^2}+\frac{y(1-y)\log(1+x)}{(1+xy)^2}\\
    >&\frac{y'(-x(1+x/2)+(1+x)\log(1+x))}{(1+xy)^2}\\
    \geq&0,
\end{align*}
where the last inequality is because the function of $x$ in the numerator is non-positive.
\end{proof}
\begin{proof}[Proof of Proposition \ref{prop}]
    As $\bar{p}\gtz$, $\tilde{p}=r^{-1}(\delta_V)$ reaches $0$ earlier. To be specific, $\tilde{p}=0$ when
    $\delta_V\geq r(0)=-\log(1-\bar{p})$. Therefore, when $\bar{p}$ is small enough, the DRBS policy stops bidding but the baseline policy does not. 
    
    As $\bar{p}\rightarrow 1,$ we must have $\tilde{p}\rightarrow 1$, otherwise
    $$\delta_V=r(\tilde{p})=\tilde{p}\log(\tilde{p}/\bar{p})+(1-\tilde{p})\log((1-\tilde{p})/(1-\bar{p}))$$
    will fail. 
    Recall that $v=a\bar{p}$, $\bar{v}=a\tilde{p}$, and the DRBS objective is
    $$D(b;\bar{v})=(\bar{v}-b)\min_{Q\in\mathcal{Q}(\delta_X)}P_Q(X\leq b).$$
    As $\bar{p}\rightarrow1$ and hence $\tilde{p}\rightarrow1$, we have $D(b;\bar{v})\rightarrow D(b;a)$ uniformly in $[0,a]$. Since each continuous $D(b;\bar{v})$ has a unique maximizer $b^*(\bar{v})$, we must have $b^*(\bar{v})\rightarrow b^*(a)$. Recall that the baseline objective is
    $$B(b;v)=(v-b)P_{\bar{Q}}(X\leq b).$$
    As $\bar{p}\rightarrow1$, we have $B(b;v)\rightarrow B(b;a)$ uniformly in $[0,a]$. Since each continuous $B(b;v)$ has a unique maximizer $b_*(v)$, we must have $b_*(v)\rightarrow b_*(a)$. By Corollary \ref{cor}, $b^*(a)>b_*(a)$, so the DRBS policy bids higher than the baseline policy when $\bar{p}$ is large enough.
    
\end{proof}

\bibliographystyle{plainnat}
\bibliography{sample-base}

\end{document}